\DeclareMathOperator{\diag}{diag} %
\DeclareMathOperator{\tr}{tr} %
\DeclareMathOperator{\im}{Im} %
\DeclareMathOperator{\re}{Re} %
\newtheorem{thm}{Theorem}
\newtheorem{lem}{Lemma}
\newtheorem{rmk}{Remark}
\newtheorem{real}{Realization}
  \title{\LARGE \bf
Preparation of Pure Gaussian States via Cascaded Quantum Systems*
}
\author{Shan Ma$^{1}$,  Matthew J. Woolley$^{1}$, Ian R. Petersen$^{1}$  and Naoki Yamamoto$^{2}$
\thanks{*This work was supported by the Australian Research Council.}
\thanks{$^{1}$School of Engineering and Information Technology, University of New South Wales at the Australian Defence Force Academy, Canberra, Australian Capital Territory 2600, Australia 
       {\tt\small shanma.adfa@gmail.com, M.Woolley@adfa.edu.au,  i.r.petersen@gmail.com‎}}%
 \thanks{$^{2}$Department of Applied Physics and Physico-informatics, Keio University, Yokohama 223-8522, Japan {\tt\small yamamoto@appi.keio.ac.jp‎}}
}
\begin{document}
\maketitle
\thispagestyle{empty}
\pagestyle{empty}

\begin{abstract}

This paper provides an alternative approach to the problem of preparing pure Gaussian states in a linear quantum system. It is shown that any pure Gaussian state can be generated by a cascade of one-dimensional open quantum harmonic oscillators, without any direct interaction Hamiltonians between these oscillators. This is physically advantageous from an experimental point of view. An example on the preparation of two-mode squeezed states is given to illustrate the theory.   
\end{abstract}

\section{INTRODUCTION}
Gaussian states are of great practical importance in quantum information processing and quantum computation~\cite{weedbrook12:rmp,NC10:book,MVGWRN06:prl,LB99:prl}. They possess several distinctive properties that make them stand out from other quantum states. They are commonly encountered in quantum optics laboratories~\cite{YU13:np}. Vacuum states, thermal states, coherent states and squeezed states are all Gaussian states. Moreover, their Gaussian character is preserved under several common experimental operations. Examples include displacement, squeezing and phase rotation. If we trace out a mode from a multipartite Gaussian state, the reduced state is also Gaussian~\cite{BR04:book}. 

Recently, the problem of  preparing pure Gaussian states has been studied in the literature~\cite{KY12:pra}. The approach is based on engineering the dissipation of the system  as a means of quantum state control. For this reason, the approach is often referred to as reservoir engineering~\cite{CPBZ93:prl,PCZ96:prl,WC13:prl,KM11:prl,WC14:arxiv}. Suppose the open quantum system is described by the following Markovian master equation~\cite{BP02:book}: 
\begin{align}\label{MME}
\frac{d}{d t}\hat{\rho} &=-i[\hat{H},\hat{\rho} ]\notag\\
&+\sum\limits_{j=1}^{m}\left(\hat{b}_{j}\hat{\rho} \hat{b}_{j}^{\dag}-\frac{1}{2}\hat{b}_{j}^{\dag}\hat{b}_{j}\hat{\rho} -\frac{1}{2}\hat{\rho} \hat{b}_{j}^{\dag}\hat{b}_{j}\right),
\end{align}
where $\hat{\rho}$ is the density operator, $\hat{H}$ is the (self-adjoint) Hamiltonian of the system  and $\hat{b}_{j}$ is the coupling operator which describes the interaction between the system and the $j$th environment. Then it is shown in~\cite{KY12:pra} that any pure Gaussian state can be uniquely generated by selecting suitable operators $\hat{H}$ and $\hat{b}_{j}$. This is also known in the field of quantum information, where a quantum state can be characterized via a set of operators known as nullifiers~\cite{MFL11:pra}. 

For general reservoir engineering problems including the above-mentioned 
Gaussian case, a common question is how to practically construct the system operators $(\hat{H}, \hat{b}_j)$ satisfying the requirements to realize the desired dissipation for a large number of modes. 
One such construction is the so-called {\it quasi-local} system-reservoir 
interaction, which requires only a few system components 
interacting with the reservoir~\cite{KBDKMZ08:pra,RLMM12:pra,TV12:ptrsa,
IY13:pra}. On the other hand, realization with the {\it cascade 
connection} has been investigated in~\cite{NJD09:siamjco,N10:tac}; 
in this case {\it all} the system components interact with the reservoir fields in a one-way fashion; see also~\cite{P11:auto}. 
A clear advantage of this construction is that the subsystems can be placed at far away sites, which as a result yield a quantum communication channel.

This paper considers the problem of preparing pure Gaussian states in 
a quantum network synthesis setting~\cite{NJD09:siamjco,N10:tac}, using a  cascade realization. 
Our result is that any pure Gaussian state can be generated by engineering 
a cascade of one-dimensional open quantum harmonic oscillators, without  
any direct interaction Hamiltonians 
between these oscillators. 
In addition to the above-mentioned advantage, this pure cascade realization may be easier to implement in practice than other realizations, 
and hence may serve as an alternative for the experimental preparation of 
pure Gaussian states.

\textit{Notation:}  Let $A=[A_{jk}]$ be a matrix (or a vector) whose entries $A_{jk}$ are complex numbers or operators. We define $A^{\dagger}=[A_{kj}^{\ast}]$, $A^{\top}=[A_{kj}]$, $A^{\#}=[A_{jk}^{\ast}]$, where the superscript ${}^{\ast}$ denotes either the complex conjugate of a number or the adjoint of an operator. Clearly, $A^{\dagger}=\left(A^{\#}\right)^{\top}$. $P_{n}$ is a $2n\times 2n$ permutation matrix defined by $P_{n}[
a_{1} \;a_{2} \;\cdots \;a_{2n}]^{\top}=[
a_{1} \;a_{3} \;\cdots \;a_{2n-1} \;a_{2} \;a_{4} \;\cdots \;a_{2n}]^{\top}$ for any column vector $a=[
a_{1} \;a_{2} \;\cdots \;a_{2n}]^{\top}$. 

\section{Preliminaries}
In this section, we first review some relevant properties of pure Gaussian states~\cite{weedbrook12:rmp,WGKWC04:pra}. Then, we provide an introduction to the network synthesis of linear quantum systems~\cite{GJ09:tac, GJ09:cmp, NJD09:siamjco}. 
 
\subsection{Pure Gaussian States}

Consider a bosonic system of $n$ modes. Each mode is characterized by a pair of quadrature field operators $\{\hat{q}_{j}, \hat{p}_{j}\}$, $j=1,2,\cdots,n$. If we collect them into a vector $\hat{\xi}=(\hat{q}_{1},\cdots,\hat{q}_{n}, \hat{p}_{1},\cdots,\hat{p}_{n})^{\top}$, the canonical commutation relations are written as
\begin{align}\label{commutation 1}
\left[\hat{\xi}_{j}, \hat{\xi}_{k}\right]\triangleq \hat{\xi}_{j}\hat{\xi}_{k}-\hat{\xi}_{k}\hat{\xi}_{j}= i\Sigma_{jk},\quad j,k=1,2,\cdots,2n,
\end{align}
where $\Sigma_{jk}$ is the $(j,k)$ element of the $2n\times 2n$ matrix $\Sigma=\begin{bmatrix}
0 & I_{n}\\
-I_{n} &0
\end{bmatrix}$. 

A state is said to be a Gaussian state if its Wigner function (a function defined in the phase space; see e.g.~\cite{weedbrook12:rmp}) is Gaussian, i.e.,
\begin{align*}
W(\xi) = \frac{ \exp[-1/2(\xi-\langle \hat{\xi}\rangle)^{\top}V^{-1}(\xi-\langle \hat{\xi}\rangle)] }{(2\pi)^{n}\sqrt{\det(V)}},
\end{align*}
where $\langle \hat{\xi}\rangle=\tr(\hat{\xi}\hat{\rho})$ is the mean value vector, and $V$ is the  covariance matrix $V=\frac{1}{2}\langle \triangle\hat{\xi}{\triangle\hat{\xi}}^{\top}+(\triangle\hat{\xi}{\triangle\hat{\xi}}^{\top})^{\top} \rangle$, $\triangle\hat{\xi}=\hat{\xi}-\langle \hat{\xi}\rangle$. A Gaussian state is completely characterized by the  mean vector $\langle \hat{\xi}\rangle$ and the covariance matrix $V$. As the mean vector $\langle \hat{\xi}\rangle$ contains no information about entanglement and can be made to vanish via local unitary operations, we will restrict our attention to zero-mean Gaussian states in the sequel~\cite{WGKWC04:pra,GECP03:QIC}. The purity of a Gaussian state is defined by $\mathbb{P}=\tr(\hat{\rho}^{2})=1/\sqrt{2^{2n}\det(V)}$. We see that a Gaussian state is pure if and only if its  covariance matrix $V$ satisfies $2^{2n}\det(V)=1$. A more explicit and useful  parametrization for the covariance matrix of a pure Gaussian state is as follows. 

\begin{lem}[\protect{\cite{SSM88:pra,WGKWC04:pra}}] 
A matrix $V_{p}$  is the covariance matrix of an $n$-mode pure Gaussian state if and only if there exist real symmetric $n\times n$ matrices $X$ and $Y$ with $Y>0$, such that
\begin{align}\label{covariance}
V_{p}=\frac{1}{2}\begin{bmatrix}
Y^{-1} &Y^{-1}X\\
XY^{-1} &XY^{-1}X+Y
\end{bmatrix}.
\end{align}
\end{lem}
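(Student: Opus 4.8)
\emph{Proof plan.} The plan is to translate the statement ``$V_{p}$ is the covariance matrix of a pure Gaussian state'' into two linear-algebraic conditions: the physical realizability inequality $V_{p}+\tfrac{i}{2}\Sigma\ge 0$ (the Robertson--Schr\"odinger uncertainty relation), and the purity criterion $2^{2n}\det V_{p}=1$ recalled above, which for $V_{p}>0$ is equivalent to $(2\Sigma V_{p})^{2}=-I$. I would then check that a matrix of the block form \eqref{covariance} satisfies both of these precisely when $X,Y$ are real symmetric with $Y>0$, and conversely that every pure-state covariance matrix can be brought to that form.

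\textbf{The ``if'' direction.} I would begin from the factorization
\begin{align*}
2V_{p}=T\,D\,T^{\top},\qquad T=\begin{bmatrix}I_{n}&0\\ X&I_{n}\end{bmatrix},\quad D=\begin{bmatrix}Y^{-1}&0\\ 0&Y\end{bmatrix},
\end{align*}
valid because $X=X^{\top}$. Here $T$ is invertible with $\det T=1$, and a one-line computation gives $T\Sigma T^{\top}=\Sigma$, so $T$ is symplectic; also $D>0$ because $Y>0$. Hence $V_{p}>0$; taking determinants gives $2^{2n}\det V_{p}=\det D=1$; and, using $T\Sigma T^{\top}=\Sigma$, one gets $V_{p}+\tfrac{i}{2}\Sigma=\tfrac12\,T(D+i\Sigma)T^{\top}$, so it only remains to verify $D+i\Sigma\ge0$, a one-step Schur-complement computation (the complement of the block $Y^{-1}$ is $Y-I\cdot Y\cdot I=0$). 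Thus $V_{p}$ is a legitimate, pure covariance matrix. (Equivalently, writing $D=\diag(Y^{-1/2},Y^{1/2})\,\diag(Y^{-1/2},Y^{1/2})^{\top}$ realizes $V_{p}=\tfrac12\,SS^{\top}$ with $S$ symplectic, i.e.\ as the vacuum covariance matrix transformed by a Gaussian unitary, which is manifestly pure.)

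\textbf{The ``only if'' direction.} Let $V_{p}$ be the covariance matrix of a pure Gaussian state. It is positive semidefinite, and $\det V_{p}=2^{-2n}\ne0$ forces $V_{p}>0$. In the Williamson normal form its symplectic eigenvalues are all $\ge\tfrac12$ (this being the uncertainty relation), while their product equals $2^{-n}$ by purity; hence each equals $\tfrac12$, equivalently $V_{p}=\tfrac12\,SS^{\top}$ for some symplectic $S$, from which $(2\Sigma V_{p})^{2}=(\Sigma SS^{\top})^{2}=\Sigma S(S^{\top}\Sigma S)S^{\top}=\Sigma S\Sigma S^{\top}=\Sigma(S\Sigma S^{\top})=\Sigma^{2}=-I$. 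I would then write $2V_{p}=\begin{bmatrix}A&C\\ C^{\top}&B\end{bmatrix}$ with $A=A^{\top}>0$ and $B=B^{\top}$ (principal blocks of a positive definite matrix), expand $(2\Sigma V_{p})^{2}=-I$ blockwise --- the $(2,1)$ block gives $CA=AC^{\top}$ and the $(2,2)$ block gives $AB=C^{2}+I$, the remaining two blocks being their transposes --- and set $Y:=A^{-1}>0$, $X:=A^{-1}C$. The identity $CA=AC^{\top}$ makes $X$ symmetric, and a direct check gives that the four blocks of $2V_{p}$ are $Y^{-1}=A$, $Y^{-1}X=C$, $XY^{-1}=A^{-1}CA=C^{\top}$, and $XY^{-1}X+Y=A^{-1}C^{2}+A^{-1}=A^{-1}(C^{2}+I)=A^{-1}(AB)=B$, which is \eqref{covariance}.

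\textbf{Main obstacle.} The step I expect to need the most care is, in the ``only if'' direction, passing from the scalar purity condition $2^{2n}\det V_{p}=1$ to the operator identity $(2\Sigma V_{p})^{2}=-I$ (equivalently, the collapse of the symplectic spectrum to the single value $\tfrac12$, equivalently $V_{p}=\tfrac12 SS^{\top}$); this rests on Williamson's normal form together with the positivity constraint that symplectic eigenvalues are $\ge\tfrac12$, neither of which is developed in the excerpt. Everything else --- the factorization, the symplecticity of $T$, the Schur-complement check, and the blockwise extraction of $X$ and $Y$ --- is routine matrix algebra; the two points to watch are that $CA=AC^{\top}$ is exactly what forces $X=A^{-1}C$ to be symmetric, and that $Y=A^{-1}$ inherits positive definiteness from $A>0$.
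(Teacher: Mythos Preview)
Your argument is correct in both directions: the factorization $2V_{p}=TDT^{\top}$ with the symplectic lower-triangular $T$ is exactly the right device for the ``if'' half, and in the ``only if'' half the reduction to $(2\Sigma V_{p})^{2}=-I$ via Williamson's theorem followed by the blockwise extraction of $X$ and $Y$ is clean and complete. The one point you flag as an obstacle---passing from the scalar purity condition to saturation of the symplectic spectrum---is indeed the only place where an external input (Williamson normal form plus the uncertainty bound $\nu_j\ge\tfrac12$) is required; everything else is self-contained linear algebra.

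As for comparison with the paper: there is nothing to compare. The paper does not prove this lemma at all; it is stated as a known result and attributed to the cited references \cite{SSM88:pra,WGKWC04:pra}. Your write-up therefore supplies what the paper omits, and does so along the same lines as the standard proofs in those sources (the $T,D$ factorization is essentially the ``Gaussian wavefunction'' parametrization $Z=X+iY$ of \cite{SSM88:pra,MFL11:pra}, and the symplectic-spectrum argument is the route taken in \cite{WGKWC04:pra}).
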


This lemma states that a (zero-mean) pure Gaussian state corresponds to a pair of real symmetric matrices $\{X, Y\}$ with $Y>0$. Given a pure Gaussian state, the matrices $X$ and $Y$ can be uniquely determined, and vice versa. This result will be repeatedly used in the following discussions.

\subsection{Network Synthesis of Linear Quantum Systems}
In the SLH framework developed in~\cite{GJ09:tac,GJ09:cmp}, an open quantum system independently coupled to $m$ environmental fields is characterized by a triple
\begin{align*}
G= (\hat{S}, \hat{L}, \hat{H}),
\end{align*}
where $\hat{S}$ is an $m\times m$ unitary scattering matrix, $\hat{L}$ is an $m\times 1$ coupling operator vector and $\hat{H}$ is the Hamiltonian operator. In the following, we  assume that no scattering is involved between the system and the quantum fields, i.e., $\hat{S}=I_{m}$. If we feed the output of an open quantum system  $G_{1}= (\hat{S}_{1}, \hat{L}_{1}, \hat{H}_{1})$ into the input of another open quantum system $G_{2}= (\hat{S}_{2}, \hat{L}_{2}, \hat{H}_{2})$, the SLH model of this cascade system is
\begin{align}\label{SLHformula}
G_{2}\lhd G_{1} =\left(\hat{S}_{2}\hat{S}_{1},\hat{L}_{2}+\hat{S}_{2}\hat{L}_{1}, \hat{H}_{2}+\hat{H}_{1}+\im(\hat{L}_{2}^{\dagger}\hat{S}_{2}\hat{L}_{1})\right).
\end{align}
Here the notation $G_{2}\lhd G_{1}$ denotes the cascade connection of $G_{1}$ and $G_{2}$.

Within this theory, we now consider a cascade of $n$ one-dimensional open quantum harmonic oscillators $G_{j}= (\hat{S}_{j}, \hat{L}_{j}, \hat{H}_{j})$, $j=1,2,\cdots,n$, as shown in Fig.~1. For each oscillator $G_{j}$, the scattering matrix is $\hat{S}_{j}=I_{m}$; the coupling operator has the linear form $\hat{L}_{j}=K_{j}\hat{x}_{j}$, $\hat{x}_{j}=[\hat{q}_{j}, \hat{p}_{j}]^{\top}$, $K_{j}\in \mathbb{C}^{m\times 2}$ and  the Hamiltonian $\hat{H}_{j}$ has the quadratic form $\hat{H}_{j}=\frac{1}{2}\hat{x}_{j}^{\top}R_{j}\hat{x}_{j}$, where $R_{j}$ is a real symmetric $2\times 2$ matrix. 
 
 \begin{figure}[htb] \label{cascade2}
\hspace*{\fill}\includegraphics[height=1.1cm]{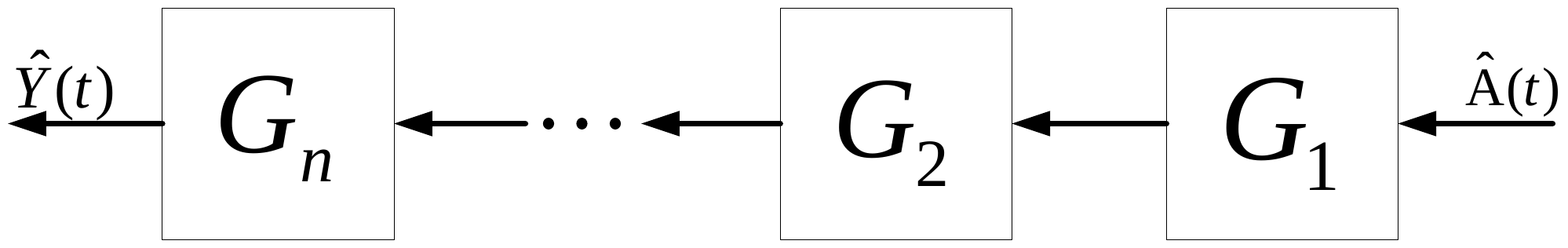} \hspace*{\fill}
\caption{A cascade connection of $n$ one-dimensional open quantum harmonic oscillators $G=G_{n}\lhd\cdots\lhd G_{2}\lhd G_{1}$. The composite system $G$ is a series product of the subsystems $G_{j}$, $j=1,2,\cdots,n$. $\hat{A}(t)$ denotes the input environmental fields; $\hat{Y}(t)$ denotes the output fields.} 
\end{figure} 

In Fig.~1, $\hat{A}(t)=[\hat{A}_{1}(t), \hat{A}_{2}(t),\cdots, \hat{A}_{m}(t)]^{\top}$ is the input of the cascade system $G$. The entries $\hat{A}_{j}(t)$, $j=1,2,\cdots, m$, represent $m$ environmental channels and satisfy the following It\=o rules:
\begin{equation} \label{ito}
\left\{\begin{aligned} 
d\hat{A}_{j}(t)d\hat{A}_{k}^{\ast}(t)&=\delta_{jk}dt,\\
 d\hat{A}_{j}(t)d\hat{A}_{k}(t)&=d\hat{A}_{j}^{\ast}(t)d\hat{A}_{k}^{\ast}(t)=d\hat{A}_{j}^{\ast}(t)d\hat{A}_{k}(t)=0.
 \end{aligned}\right.
\end{equation}
The output field $\hat{Y}(t)=[\hat{Y}_{1}(t), \hat{Y}_{2}(t),\cdots, \hat{Y}_{m}(t)]^{\top}$ results from the interaction between the cascade system $G$ and the input field $\hat{A}(t)$. The entries $\hat{Y}_{j}(t)$, $j=1,2,\cdots, m$, also satisfy It\=o rules similar to~\eqref{ito}~\cite{GJ09:tac,GJ09:cmp}. 

By using the formula~\eqref{SLHformula}, an explicit expression can be derived for the SLH model of the cascade system $G$. 

\begin{lem}[\protect{\cite{N10:tac}}] \label{nurdin}
The system $G=(\hat{S}, \hat{L}, \hat{H})$ obtained by a cascade of $n$ one-dimensional open quantum harmonic oscillators $G_{j}= (\hat{S}_{j}, \hat{L}_{j}, \hat{H}_{j})$, $\hat{S}_{j}=I_{m}$, $j=1,2,\cdots,n$, has the following SLH model:
\begin{equation}
\left\{\begin{aligned}
\hat{S}&=I_{m},\\
\hat{L}&=K\hat{\xi}, \quad K=\begin{bmatrix}
K_{1} &K_{2} &\cdots &K_{n}
\end{bmatrix}P_{n}^{\top},\\
\hat{H}&=\frac{1}{2}\hat{\xi} ^{\top}R \hat{\xi},\quad R=P_{n}MP_{n}^{\top}, 
 \end{aligned}\right. \notag
\end{equation}
where $M=[M_{jk}]_{j,k=1,\cdots,n}$ with $M_{jj}=R_{j}$, $M_{jk}=\im \{K_{j}^{\dagger}K_{k}\}$ whenever $j>k$ and $M_{jk}=M_{kj}^{\top}$ whenever $j<k$. 
\end{lem}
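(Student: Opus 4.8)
The plan is to prove Lemma~\ref{nurdin} by induction on the number of oscillators $n$, using the series product formula~\eqref{SLHformula} as the inductive step. For $n=1$ the claim is immediate: $G = G_1$, the permutation $P_1 = I_2$, so $K = K_1$, $M = M_{11} = R_1$, and $R = R_1$, which matches the given data $\hat{H}_1 = \frac12 \hat{x}_1^\top R_1 \hat{x}_1$ and $\hat{L}_1 = K_1 \hat{x}_1$. For the inductive step, I would write $G^{(k)} = G_k \lhd G^{(k-1)}$ where $G^{(k-1)}$ is the cascade of the first $k-1$ oscillators, assume the formula holds for $G^{(k-1)}$ with coupling matrix built from $K_1,\dots,K_{k-1}$ and Hamiltonian matrix built from $M_{jl}$, $j,l \le k-1$, and then apply~\eqref{SLHformula} with $\hat{S}_k = I_m$.

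The first substantive step is bookkeeping with the quadrature vectors. The composite system on $k$ modes has quadrature vector $\hat{\xi}^{(k)} = (\hat{q}_1,\dots,\hat{q}_k,\hat{p}_1,\dots,\hat{p}_k)^\top$, whereas the per-oscillator description naturally groups quadratures mode-by-mode as $(\hat{x}_1^\top,\dots,\hat{x}_k^\top)^\top = (\hat{q}_1,\hat{p}_1,\dots,\hat{q}_k,\hat{p}_k)^\top$. The permutation $P_k$ is exactly the map between these two orderings, so $\hat{L} = K\hat{\xi}$ with $K = [K_1 \; \cdots \; K_k] P_k^\top$ is just the statement that $\hat{L} = \sum_j K_j \hat{x}_j$, and similarly $\hat{H} = \frac12 \hat{\xi}^\top R \hat{\xi}$ with $R = P_k M P_k^\top$ says $\hat{H} = \frac12 \sum_{j,l} \hat{x}_j^\top M_{jl} \hat{x}_l$. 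So I would first translate the inductive hypothesis into the ``mode-ordered'' form $\hat{L}^{(k-1)} = \sum_{j=1}^{k-1} K_j \hat{x}_j$, $\hat{H}^{(k-1)} = \frac12 \sum_{j,l=1}^{k-1} \hat{x}_j^\top M_{jl} \hat{x}_l$, which removes the permutations from the calculation entirely.

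Now apply~\eqref{SLHformula}. Since $\hat{S}_k = \hat{S}^{(k-1)} = I_m$, the new scattering matrix is $I_m$. The new coupling operator is $\hat{L}^{(k)} = \hat{L}_k + \hat{S}_k \hat{L}^{(k-1)} = K_k \hat{x}_k + \sum_{j=1}^{k-1} K_j \hat{x}_j = \sum_{j=1}^k K_j \hat{x}_j$, which is the desired form. The new Hamiltonian is $\hat{H}^{(k)} = \hat{H}_k + \hat{H}^{(k-1)} + \im(\hat{L}_k^\dagger \hat{S}_k \hat{L}^{(k-1)})$. The first two terms contribute $\frac12 \hat{x}_k^\top R_k \hat{x}_k + \frac12 \sum_{j,l=1}^{k-1} \hat{x}_j^\top M_{jl}\hat{x}_l$, accounting for the diagonal block $M_{kk} = R_k$ and all blocks $M_{jl}$ with $j,l \le k-1$. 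The cross term is $\im(\hat{L}_k^\dagger \hat{L}^{(k-1)}) = \im\big(\sum_{j=1}^{k-1} \hat{x}_k^\top K_k^\dagger K_j \hat{x}_j\big)$; here I need to take the Hermitian/imaginary part carefully at the operator level, symmetrizing $\hat{x}_k^\top (\cdot) \hat{x}_j$ against its transpose $\hat{x}_j^\top (\cdot)^\top \hat{x}_k$, to extract exactly $\frac12 \sum_{j<k} \big(\hat{x}_k^\top \im\{K_k^\dagger K_j\}\hat{x}_j + \hat{x}_j^\top \im\{K_k^\dagger K_j\}^\top \hat{x}_k\big)$, which is precisely the contribution of the off-diagonal blocks $M_{kj} = \im\{K_k^\dagger K_j\}$ for $j<k$ together with their transposes $M_{jk} = M_{kj}^\top$. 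Reassembling and conjugating by $P_k$ recovers the stated $(\hat{S},\hat{L},\hat{H})$.

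The main obstacle I anticipate is this last point: correctly handling $\im(\hat{L}_k^\dagger \hat{L}^{(k-1)})$ when the entries of $\hat{L}$ are operators rather than scalars. One must be precise that $\im$ of an operator means $(\hat{Z}-\hat{Z}^\dagger)/(2i)$, and that $(\hat{x}_k^\top K_k^\dagger K_j \hat{x}_j)^\dagger = \hat{x}_j^\top (K_k^\dagger K_j)^\dagger \hat{x}_k = \hat{x}_j^\top K_j^\dagger K_k \hat{x}_k$ since the quadratures are self-adjoint; the commutators $[\hat{x}_{j,a},\hat{x}_{k,b}]$ vanish for $j\neq k$ so no extra $c$-number terms appear, and the real (symmetric) part of the scattering-free series product drops out of $\im(\cdot)$, leaving only the antisymmetric piece that becomes $M_{kj}$. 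Once this identity is pinned down, everything else is substitution and the permutation-relabeling already described, so I would present the $n=1$ base case and the single series-product step as the core of the argument and relegate the index manipulations to a short display.
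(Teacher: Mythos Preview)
Your inductive argument via the series product formula~\eqref{SLHformula} is correct, and your handling of the operator-valued imaginary part is right: since $\hat{x}_j$ and $\hat{x}_k$ commute for $j\neq k$, one has $\im\big(\hat{x}_k^\top K_k^\dagger K_j \hat{x}_j\big)=\hat{x}_k^\top \im\{K_k^\dagger K_j\}\hat{x}_j$, which symmetrizes to give exactly the off-diagonal blocks $M_{kj}$ and $M_{jk}=M_{kj}^\top$. The permutation bookkeeping translating between $\hat{\xi}$ and $(\hat{x}_1^\top,\dots,\hat{x}_n^\top)^\top$ is also handled correctly.

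Note, however, that the paper does \emph{not} supply its own proof of this lemma: it is quoted verbatim as a known result from~\cite{N10:tac}, so there is no in-paper argument to compare against. Your induction is in fact the standard derivation one would give (and is essentially how it is obtained in the cited reference), so there is nothing to add beyond observing that your proposal fills in a proof the paper deliberately omits.
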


Note that $M_{jk}$ are all real $2\times 2$ matrices. Hence $M$ is a real symmetric $2n\times 2n$ matrix. Alternatively, we can use a quantum stochastic differential equation (QSDE) to  describe the evolution of the entire cascade system $G$. The QSDE model is as follows~\cite{N10:tac,NJD09:siamjco,Y12:ptrsa}: 
\begin{equation} \label{QSDEequation}
\left\{\begin{aligned}
d\hat{\xi}(t)&=\mathcal{A}\hat{\xi}(t)dt+\mathcal{B} \begin{bmatrix}
d\hat{A}(t)\\
d\hat{A}^{\#}(t)
\end{bmatrix},\\
d\hat{Y}(t)&=\mathcal{C}\hat{\xi}(t)dt+\mathcal{D}d\hat{A}(t),
 \end{aligned}\right.
\end{equation}
where $\mathcal{A}=\Sigma(R+\im(K^{\dagger}K))$, $\mathcal{B}=i\Sigma[- K^{\dagger}\; \; K^{\top}]$, $\mathcal{C}=K$, $\mathcal{D}=I_{m}$.

In fact, a bijective correspondence exists between the SLH model and the QSDE model of an open quantum system in the case of $\hat{S}=I$; see~\cite{JNP08:tac} for details. 
%
%

From the QSDE~\eqref{QSDEequation}, it follows that the time evolution of the mean value $\langle \hat{\xi}(t)\rangle$ and the covariance matrix $V(t)$ is as follows:
\begin{align}
\frac{d\langle \hat{\xi}(t)\rangle}{dt}&=\mathcal{A}\langle \hat{\xi}(t)\rangle,\\
\frac{dV(t)}{dt}&=\mathcal{A}V(t)+V(t)\mathcal{A}^{\top}+\frac{1}{2}\mathcal{B}\mathcal{B}^{\dagger}. \label{lyapunov} 
\end{align}

If the initial state of the system is given by a Gaussian state, then 
at any later time $t\geq 0$ the system is in a Gaussian state with mean vector 
$\langle\hat{\xi}(t)\rangle$ and covariance matrix $V(t)$. 
In particular we are interested in a steady Gaussian state with covariance 
matrix $V(\infty)$. In order to generate a pure Gaussian state {\it uniquely}, the matrix $\mathcal{A}$ must be 
a Hurwitz matrix, i.e., every eigenvalue of $\mathcal{A}$ has a negative real part.

\section{Main Result}
In this section, we show that any pure Gaussian state can be generated by engineering a cascade of several one-dimensional open quantum harmonic oscillators. A detailed  construction of such a cascade system is also given. 
Note that there may exist different cascade realizations for a given pure Gaussian state, and some realizations may be easier to implement than others in practice. We provide a feasible construction method here, without any explicit optimization over these constructions. 

\begin{thm} \label{theorem1}
Any $n$-mode pure Gaussian state can be generated by engineering a cascade of $n$ one-dimensional open quantum harmonic oscillators.
\end{thm}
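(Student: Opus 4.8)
The plan is to exhibit one concrete cascade, read off directly from the data $(X,Y)$, and then to verify that its drift is Hurwitz and that the steady-state Lyapunov equation~\eqref{lyapunov} is solved by the target covariance matrix. By the parametrization~\eqref{covariance} the state corresponds to real symmetric $X,Y$ with $Y>0$; I would set $Z:=X+iY$ (note $Z^{\top}=Z$, $Z^{\dagger}=\bar Z$), $\tilde C:=(2Y)^{-1/2}$ (well defined since $Y>0$), and propose the SLH data $\hat S=I_n$, $\hat H=0$, $\hat L=C\hat\xi$ with $C:=\tilde C\,[\,-Z\ \ I_n\,]$ — i.e.\ the coupling operators are (normalized) nullifiers of the state, dissipated with no coherent dynamics. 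To present this as a cascade of $n$ one-dimensional oscillators I would spread the coupling over the $m=n$ channels by writing $C P_n=[\,K_1\ \cdots\ K_n\,]$, giving $K_j=[\,-(\tilde CZ)_{:,j}\ \ \tilde C_{:,j}\,]\in\mathbb C^{n\times2}$, and take every local Hamiltonian trivial, $R_j=0$, so that $G_j=(I_n,K_j\hat x_j,0)$.

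The decisive step is to check that this choice is actually realized by the bare cascade $G=G_n\lhd\cdots\lhd G_1$, with no direct interaction Hamiltonians appearing. By Lemma~\ref{nurdin} the composite Hamiltonian is $\tfrac12\hat\xi^{\top}R\hat\xi$ with $R=P_nMP_n^{\top}$, $M_{jj}=R_j=0$, and $M_{jk}=\im\{K_j^{\dagger}K_k\}$ for $j>k$, so it suffices to prove $\im\{K_j^{\dagger}K_k\}=0$ whenever $j\ne k$. This is the heart of the argument: one computes
\[
K_j^{\dagger}K_k=\begin{bmatrix}(\bar Z\Psi Z)_{jk}&-(\bar Z\Psi)_{jk}\\ -(\Psi Z)_{jk}&\Psi_{jk}\end{bmatrix},\qquad \Psi:=\tilde C^{\dagger}\tilde C=(2Y)^{-1},
\]
and since $\Psi=\tfrac12Y^{-1}$ commutes with $Y$ one gets $\bar Z\Psi Z=\tfrac12(XY^{-1}X+Y)$ (real), $\bar Z\Psi=\tfrac12(XY^{-1}-iI)$, $\Psi Z=\tfrac12(Y^{-1}X+iI)$, and $\Psi$ real — in each case the off-diagonal imaginary part vanishes. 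Hence $M=0$, $R=0$: the cascade reproduces exactly $(\,I_n,\ C\hat\xi,\ 0\,)$, so there really are no interaction Hamiltonians between the oscillators.

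It then remains to confirm that $(\,I_n,\ C\hat\xi,\ 0\,)$ uniquely generates the prescribed state. From the QSDE data of~\eqref{QSDEequation} with $R=0$, $K=C$, the same identities give $\im(C^{\dagger}C)=\tfrac12\Sigma$, hence $\mathcal A=\Sigma\cdot\tfrac12\Sigma=-\tfrac12 I_{2n}$, which is Hurwitz; and $\tfrac12\mathcal B\mathcal B^{\dagger}=-\Sigma\,\re(C^{\dagger}C)\,\Sigma$, which for this $\tilde C$ evaluates to exactly the matrix $V_p$ of~\eqref{covariance}. The Lyapunov equation~\eqref{lyapunov} then forces $V(\infty)=V_p$, and Hurwitzness makes it the unique steady state, which is pure by construction; the $G_j$ are then obtained by reading off $K_j$ and $R_j=0$. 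I expect the only genuine obstacle to be spotting the normalization $\tilde C^{\dagger}\tilde C=(2Y)^{-1}$, which is precisely what makes the cascade-induced interaction Hamiltonian vanish identically and simultaneously makes the drift a multiple of the identity; the remaining steps are routine matrix bookkeeping.
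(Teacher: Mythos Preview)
Your proposal is correct and follows essentially the same route as the paper's proof: the same nullifier-type coupling $C=\tilde C[-Z,\ I_n]$ with all $R_j=0$, the same verification that $\im(K_j^{\dagger}K_k)=0$ for $j\neq k$ so no interaction Hamiltonian is induced by the cascade, and the same Hurwitz/Lyapunov check. The only difference is your normalization $\tilde C=(2Y)^{-1/2}$ versus the paper's $Y^{-1/2}$, which gives $\mathcal A=-\tfrac12 I_{2n}$ rather than $-I_{2n}$; any positive multiple of $Y^{-1}$ for $\tilde C^{\dagger}\tilde C$ would work here, so this is an inessential scaling.
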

\begin{proof}
We prove this result by construction. Recall that any covariance matrix $V$ of a pure Gaussian state has the representation shown in~\eqref{covariance}. Using the matrices $X$ and $Y$ obtained from~\eqref{covariance}, we construct a cascade system $G=G_{n}\lhd\cdots\lhd G_{2}\lhd G_{1}$ with the SLH model of $G_{j}$, $j=1,2,\cdots,n$, given by
\begin{equation} \label{cascadesubsystem}
\left\{\begin{aligned}
\hat{S}_{j}&= I_{n},\\
\hat{L}_{j}&=K_{j}\hat{x}_{j},\; K_{j}=Y^{-1/2}\left[-(X+iY),\;I_{n}\right]P_{n}\begin{bmatrix}
0_{(2j-2)\times 2}\\
I_{2}\\
0_{(2n-2j)\times 2}
\end{bmatrix},\\
\hat{H}_{j}&=\frac{1}{2}\hat{x}_{j}^{\top}R_{j}\hat{x}_{j},\quad R_{j}=0_{2\times 2}.
 \end{aligned}\right.
\end{equation}

Next, we show that the steady state of this cascade system $G$ is the desired pure Gaussian state with the covariance matrix $V$.
We now calculate the SLH model of the cascade system $G$. 
\begin{equation} 
\left\{\begin{aligned}
\hat{S}&= I_{n},\\
\hat{L}&=K\hat{\xi},\\
\hat{H}&=\frac{1}{2}\hat{\xi}^{\top}R\hat{\xi}.
 \end{aligned}\right.
\end{equation} 

Using Lemma~\ref{nurdin},  
 \begin{align*}
K&=\begin{bmatrix}
K_{1} &K_{2} &\cdots &K_{n}
\end{bmatrix}P_{n}^{\top}\\
&=Y^{-1/2}\left[-(X+iY),\;I_{n}\right]P_{n}\begin{bmatrix}
I_{2} &0_{2\times 2} &\cdots  &0_{2\times 2}\\
0_{2\times 2} &I_{2}  &\cdots  &0_{2\times 2}\\
\vdots &\vdots &\vdots &\vdots   &\vdots\\
0_{2\times 2} &0_{2\times 2} &\cdots  &I_{2}
\end{bmatrix}P_{n}^{\top}\\
&=Y^{-1/2}\left[-(X+iY),\;I_{n}\right].
 \end{align*}
 
On the other hand, when $j> k$, we have
\begin{align*}
&\im(K_{j}^{\dagger}K_{k})\\
&=\im\left([0_{2\times(2j-2)}\; I_{2}\; 0_{2\times (2n-2j)}]P_{n}^{\top}\begin{bmatrix}
-(X-iY)\\
I_{n}
\end{bmatrix}Y^{-1/2}\vphantom{\begin{bmatrix}
0_{(2i-2)\times 2}\\
I_{2}\\
0_{(2n-2i)\times 2}
\end{bmatrix}}\right.\\
&\quad \left. Y^{-1/2}\left[-(X+iY),\;I_{n}\right]P_{n}\begin{bmatrix}
0_{(2k-2)\times 2}\\
I_{2}\\
0_{(2n-2k)\times 2}
\end{bmatrix}\right)\\
&=[0_{2\times(2j-2)}\; I_{2}\; 0_{2\times (2n-2j)}]P_{n}^{\top}\Sigma P_{n}\begin{bmatrix}
0_{(2k-2)\times 2}\\
I_{2}\\
0_{(2n-2k)\times 2}
\end{bmatrix}\\
&=[0_{2\times(2j-2)}\; I_{2}\; 0_{2\times (2n-2j)}]\Theta\begin{bmatrix}
0_{(2k-2)\times 2}\\
I_{2}\\
0_{(2n-2k)\times 2}
\end{bmatrix}\\
&=0, 
\end{align*}
where $\Theta=\diag_{n}(J)$, $J=\begin{bmatrix}
0 &1\\
-1 &0
\end{bmatrix}$. 
Then we have $R=0_{2n\times 2n}$. Therefore, the SLH model of the cascade system $G$ is  
\begin{equation} 
\left\{\begin{aligned}
\hat{S}&= I_{n},\\
\hat{L}&=K\hat{\xi},\quad K=Y^{-1/2}\left[-(X+iY),\;I_{n}\right],\\
\hat{H}&=\frac{1}{2}\hat{\xi}^{\top}R\hat{\xi},\quad R=0_{2n\times 2n}.
 \end{aligned}\right. \notag
\end{equation}

 It follows from the QSDE~\eqref{QSDEequation} that
\begin{align*}
\mathcal{A}&=\Sigma(R+\im(K^{\dagger}K))\\
&=\Sigma \im \left(\begin{bmatrix}
(X-iY)Y^{-1}(X+iY) &-(X-iY)Y^{-1}\\
-Y^{-1}(X+iY) & Y^{-1}
\end{bmatrix}\right)\\
&=\Sigma\Sigma\\
&=-I_{2n},\\
\mathcal{B}&=i\Sigma[- K^{\dagger}\; \; K^{\top}]\\
&=i\Sigma\left[\begin{bmatrix}
-X+iY\\
I
\end{bmatrix}Y^{-1/2}\;\; \begin{bmatrix}
-X-iY\\
I
\end{bmatrix}Y^{-1/2}\right]\\
&=i\begin{bmatrix}
Y^{-1/2} &Y^{-1/2}\\
XY^{-1/2}-iY^{1/2} &XY^{-1/2}+iY^{1/2}
\end{bmatrix}.
\end{align*}

Clearly, $\mathcal{A}$ is a Hurwitz matrix. Furthermore, substituting the matrices $\mathcal{A}$ and $\mathcal{B}$ into Eq.~\eqref{lyapunov} yields 
\begin{align}
&\mathcal{A}V +V \mathcal{A}^{\top}+\frac{1}{2}\mathcal{B}\mathcal{B}^{\dagger}\notag\\
=&-\begin{bmatrix}
Y^{-1} &Y^{-1}X\\
XY^{-1} &XY^{-1}X+Y
\end{bmatrix}+\begin{bmatrix}
Y^{-1} &Y^{-1}X\\
XY^{-1} &XY^{-1}X+Y
\end{bmatrix}\notag\\
=&0. \label{lyapunov2}
\end{align}  

The Hurwitz property of $\mathcal{A}$ and the Lyapunov equation~\eqref{lyapunov2} guarantee that this pure Gaussian state is the unique steady state of the cascade system $G$. 
\end{proof}

\begin{rmk}
In our construction, there is no need for a Hamiltonian contribution; the steady state is only determined by the coupling operators (dissipation). This is analogous to the manner in which a pure Gaussian state  can be specified via its nullifier operators~\cite{MFL11:pra}. Note also that the constructed system $G$ is coupled to $n$ quantum noises (environmental channels). That is, $\hat{L}$ is an $n\times 1$ coupling operator vector. This is sometimes unnecessary. In some cases, an $n$-mode pure Gaussian state may be generated by a cascade system with less than $n$ quantum noises. An example to illustrate this is given in Section~\ref{illustrative example}. 
\end{rmk}

\section{Illustrative Example}\label{illustrative example}
This section studies the preparation problem of the so-called 
{\it two-mode squeezed state}~\cite{weedbrook12:rmp}; 
this is a highly symmetric entangled state which is useful for several 
quantum information protocols such as teleportation. 
It is known that the covariance matrix $V$ of a two-mode squeezed state 
is
\begin{align*}
V=\frac{1}{2}\begin{bmatrix}
\cosh(2\alpha) &\sinh(2\alpha) &0 &0\\
\sinh(2\alpha) &\cosh(2\alpha) &0 &0\\
0 &0 &\cosh(2\alpha) &-\sinh(2\alpha)\\
0 &0 &-\sinh(2\alpha) &\cosh(2\alpha)
\end{bmatrix},
\end{align*}
where $\alpha$ is the squeezing parameter. 
Then it can be calculated from~\eqref{covariance} that 
\begin{align*}
X=0_{2\times 2},\quad Y=\begin{bmatrix}
\cosh(2\alpha) &-\sinh(2\alpha)\\
 -\sinh(2\alpha) &\cosh(2\alpha)
\end{bmatrix}.
\end{align*}
Our objective here is to engineer a cascade of two open quantum  harmonic oscillators $G=G_{2}\lhd G_{1}$ shown in Fig.~2, such that the desired two-mode squeezed state is generated uniquely. \\
 \begin{figure}[htb] \label{two-mode-cascade}
\hspace*{\fill}\includegraphics[height=1.1cm]{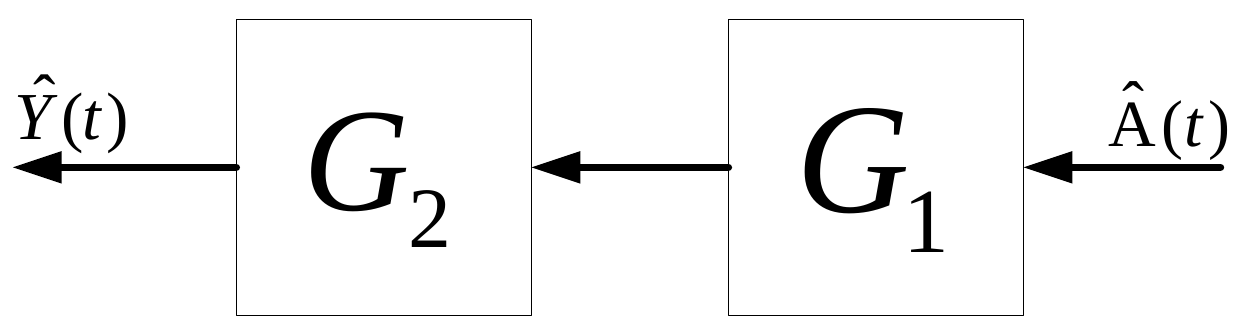} \hspace*{\fill}
\caption{The cascade realization of two-mode squeezed states. The two blocks $G_{1}$ and $G_{2}$ represent two open quantum  harmonic oscillators. The output of $G_{1}$ is fed into the input of $G_{2}$.} 

\end{figure} \\
Here we provide two different realizations.

\begin{real}\label{example1}
In this cascade realization, the SLH models for the subsystems $G_{1}$ and $G_{2}$ are

\begin{equation}
\left\{\begin{aligned}
\hat{S}_{1}&= 1,\\
\hat{L}_{1}&=[iQ_{1}\; 1 ]\hat{x}_{1},,\\
\hat{H}_{1}&=\frac{1}{2}\hat{x}_{1}^{\top}\begin{bmatrix}
2 &Q_{2}\\
Q_{2} &2
\end{bmatrix}\hat{x}_{1},\\
Q_{1}&=\sinh(2\alpha)-\cosh(2\alpha),\\
Q_{2}&=\frac{\sinh^{2}(2\alpha)-\sinh(2\alpha)\cosh(2\alpha)}{\cosh(2\alpha)},
 \end{aligned}\right. \notag
\end{equation}
and
\begin{equation}
\left\{\begin{aligned}
\hat{S}_{2}&= 1,\\
\hat{L}_{2}&=[iQ_{1}\; 1 ]\hat{x}_{2},\\
\hat{H}_{2}&=-\frac{1}{2}\hat{x}_{2}^{\top}\begin{bmatrix}
2 &Q_{2}\\
Q_{2} &2
\end{bmatrix}\hat{x}_{2},\\
 \end{aligned}\right. \notag
\end{equation}
 respectively. 

Using Lemma~\ref{nurdin}, the Hamiltonian matrix $R$ and coupling matrix $K$ of the system $G$ are calculated as follows.
\begin{align*}
R&=\begin{bmatrix}
2 &0 &Q_{2}  &Q_{1}\\
0 &-2 &-Q_{1} &-Q_{2}\\
Q_{2} &- Q_{1} &2  &0\\
Q_{1}  & -Q_{2} &0 &-2
\end{bmatrix},\\
K&=\begin{bmatrix}
iQ_{1} &iQ_{1} &1 &1
\end{bmatrix}.
\end{align*}
Then we have
\begin{align*}
\mathcal{A}&=\Sigma(R+\im(K^{\dagger}K))\\
&=\Sigma\begin{bmatrix}
2 &0 &Q_{2}-Q_{1} &0\\
0 &-2 &-2Q_{1} &-(Q_{2}+Q_{1})\\
Q_{2}+Q_{1} &0 &2  &0\\
2Q_{1} &Q_{1}-Q_{2} &0 &- 2
\end{bmatrix}\\
&=\begin{bmatrix}
Q_{2}+Q_{1} &0 &2 &0\\
2Q_{1} &Q_{1}-Q_{2} &0 &-2\\
-2 &0 &Q_{1}-Q_{2}  &0\\
0 &2 &2Q_{1} &Q_{2}+Q_{1}
\end{bmatrix}\\
\frac{1}{2}\mathcal{B}\mathcal{B}^{\dagger}&=\Sigma\re(K^{\dagger}K)\Sigma^{\top}\\
&=\begin{bmatrix}
1 &1   &0  &0\\
1 &1  &0  &0 \\
0 &0  &Q_{1}^{2} &Q_{1}^{2}\\
0 &0  &Q_{1}^{2} &Q_{1}^{2}\\
\end{bmatrix}
\end{align*}

The characteristic polynomial of $\mathcal{A}$ is
\begin{align*}
\det(\lambda I-\mathcal{A})=\left(\lambda^{2}-2Q_{1}\lambda+Q_{1}^{2}-Q_{2}^{2}+4\right)^{2}
\end{align*}
Because $Q_{1}<0$ and $Q_{1}^{2}-Q_{2}^{2}+4>0$, it follows from Vieta's formulas that $\mathcal{A}$ is Hurwitz. Combining this fact with the Lyapunov equation~\eqref{lyapunov} yields that the two-mode squeezed state is the steady state of the cascade system $G=G_{2}\lhd G_{1}$. Based on the results in~\cite{NJD09:siamjco}, a quantum optical realization of such a cascade system $G$ is given in Fig.~3. As shown in Fig.~3, the Hamiltonian of the system is realized by a pumped nonlinear crystal with specified pump intensity parameter $\varepsilon$ and cavity detuning parameter $\Delta$, and the coupling of the system is realized by implementing an auxiliary cavity. This auxiliary cavity interacts with the system via a cascade of a pumped nonlinear crystal and a beam splitter. The auxiliary cavity has a fast mode that can be adiabatically eliminated.
 \end{real}

 \begin{figure}[htb] \label{two-mode-realization1}
\hspace*{\fill}\includegraphics[width=8.7cm]{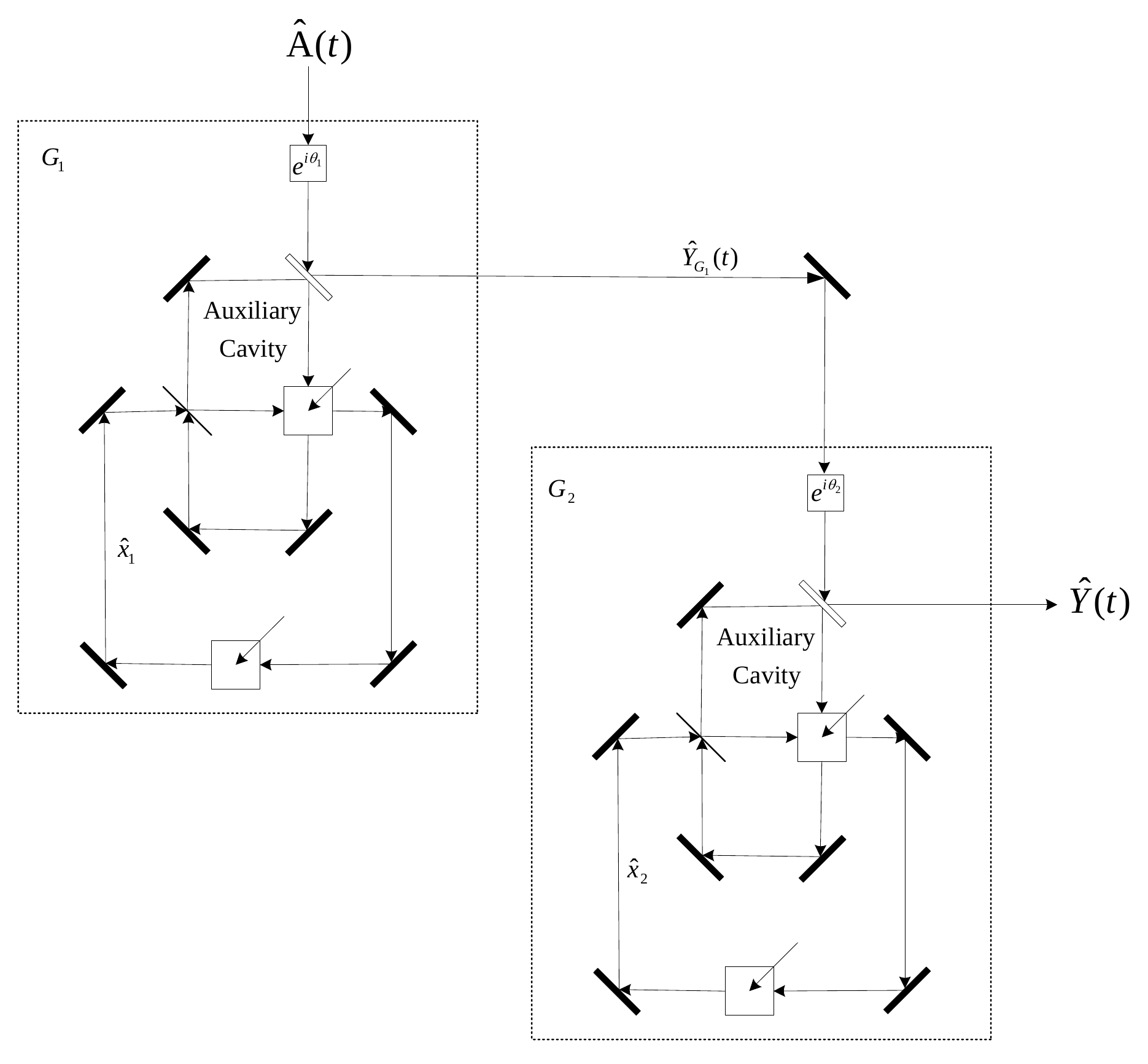} \hspace*{\fill}
\caption{An optical cascade realization of two-mode squeezed states. The output of the subsystem $G_{1}$ is fed into the input of the subsystem $G_{2}$.  
Here the square with the arrow indicates a pumped nonlinear crystal. 
$e^{i\theta}$ indicates a phase shifter. Solid (dark) rectangles denote perfectly reflecting mirrors, while unfilled rectangles denote partially transmitting mirrors. The dark line ``\textbf{$\mathbb{\diagdown}$}'' represents an optical beam splitter.
$\hat{x}_1$ and $\hat{x}_2$ are the optical modes of the bigger cavities in the 
subsystems $G_1$ and $G_2$, respectively; 
 } 
\end{figure} 
\begin{real}\label{example2}

In this cascade realization, we use the result in Theorem~\ref{theorem1}. By direct computations, the SLH models for the subsystems $G_{1}$ and $G_{2}$ are given by

\begin{equation} 
\left\{\begin{aligned}
\hat{S}_{1}&= I_{2},\\
\hat{L}_{1}&=K_{1}\hat{x}_{1},\quad K_{1}=\begin{bmatrix}
-i\cosh(\alpha) &\cosh(\alpha)\\
i\sinh(\alpha) &\sinh(\alpha)
\end{bmatrix}\\
\hat{H}_{1}&=\frac{1}{2}\hat{x}_{1}^{\top}R_{1}\hat{x}_{1},\quad  R_{1}=
0_{2\times 2},
 \end{aligned}\right. \notag
\end{equation}
and
\begin{equation} 
\left\{\begin{aligned}
\hat{S}_{2}&= I_{2},\\
\hat{L}_{2}&=K_{2}\hat{x}_{2},\quad K_{2}=\begin{bmatrix}
i\sinh(\alpha) &\sinh(\alpha)\\
-i\cosh(\alpha) &\cosh(\alpha)
\end{bmatrix},\\
\hat{H}_{2}&=\frac{1}{2}\hat{x}_{2}^{\top}R_{2}\hat{x}_{2},\quad R_{2}=0_{2\times 2},
 \end{aligned}\right. \notag
\end{equation}
 respectively.

A corresponding quantum optical realization of such a cascade system $G=G_{2}\lhd G_{1}$ is given in Fig.~4. As $R_{1}=R_{2}=0_{2\times 2}$, no optical crystals  have to be implemented for the Hamiltonians of the system. On the other hand, as one component of the coupling operator vector $\hat{L}_{1}$ of the subsystem $G_{1}$  is
\begin{align*}
\hat{L}_{11}= \begin{bmatrix}
-i\cosh(\alpha) &\cosh(\alpha)
\end{bmatrix}\begin{bmatrix}
\hat{q}_{1}\\
\hat{p}_{1}
\end{bmatrix}=-i\sqrt{2}\cosh(\alpha)\hat{a}_{1},
\end{align*}
where $\hat{a}_{1}$ denotes the annihilation operator, it can be simply implemented with a partially transmitting mirror without implementing an auxiliary cavity. This situation also occurs in the subsystem $G_{2}$.

  \begin{figure}[htb] \label{two-mode-realization2}
\hspace*{\fill}\includegraphics[width=8.8cm]{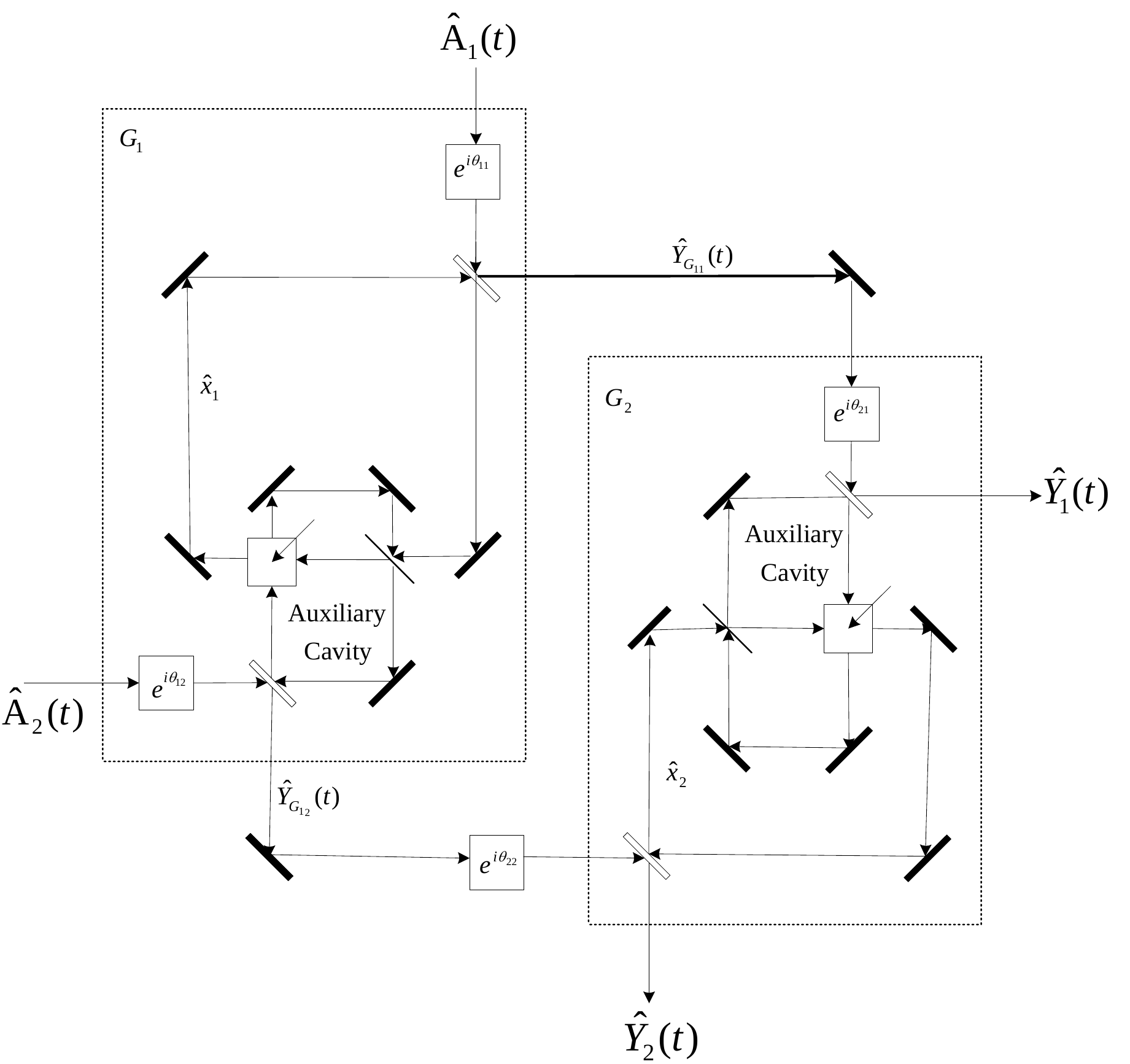} \hspace*{\fill}
\caption{An optical cascade realization of two-mode squeezed states. The output of the subsystem $G_{1}$ is fed into the input of the subsystem $G_{2}$. The meanings of the symbols used here are as given in the caption of Fig.~3.} 
\end{figure}
   \end{real}
   
As shown in Fig.~3 and Fig.~4, the engineered system in Realization~\ref{example1} is coupled to only one quantum noise, while in Realization~\ref{example2}, the engineered system is coupled to two quantum noises. 

\begin{rmk} 
It was shown in~\cite{KY12:pra} that a pure entangled Gaussian 
state is generated in a simpler cascade system composed of standard 
optical parametric oscillators. 
However, the state generated is asymmetric, unlike the two-mode squeezed 
state. 
The point obtained above is that a symmetric (thus highly) 
entangled state is generated in a one-way (thus asymmetric) cascade 
system. 
\end{rmk}

\section{CONCLUSIONS}
This paper has considered the problem of preparing pure Gaussian states in a linear quantum system. We have shown that any pure Gaussian state can be generated by a pure cascade of several one dimensional open quantum  harmonic oscillators. No interaction Hamiltonians have to be implemented between these oscillators. This pure cascade  feature indicates that the proposed approach may be useful in quantum information processing. 
For instance, as mentioned in Section~I, it yields a direct realization of a 
quantum communication channel where each subsystem corresponds to 
a quantum repeater~\cite{HV11:pra}. 
More precisely, we can now dissipatively generate a long-distance entangled 
state by assigning a target pure Gaussian state to 
$\hat{\rho}_{12}\otimes \hat{\rho}_{34} \otimes \cdots \otimes\hat{\rho}_{(n-1)n}$ with $\hat{\rho}_{ij}$ 
a two-mode squeezed state generated among the $i$th and $j$th subsystems 
(repeaters) and then by performing the entanglement swapping via 
Bell-measurement on each site. 


\end{document}